\def\BState{\State\hskip-\ALG@thistlm}
\newtheorem{prop}{Proposition}
\newtheorem{cor}{Corollary}
\newcommand{\matr}[1]{\mathbf{#1}} % undergraduate algebra version
\setlist[itemize]{leftmargin=*}
\title{ARQ with Cumulative Feedback to Compensate for Burst Errors}
\author{
  \IEEEauthorblockN{Derya Malak and Muriel M\'{e}dard}
  \IEEEauthorblockA{RLE, Massachusetts Institute of Technology\\
                    Cambridge, MA 02139 USA\\
                    Email: \{deryam, medard\}@mit.edu}
  \and
  \IEEEauthorblockN{Edmund M. Yeh} 
  \IEEEauthorblockA{ECE Dept., Northeastern University\\ 
                    Boston, MA 02115, USA\\
                    Email: eyeh@ece.neu.edu}
    }
\begin{document}

\maketitle
\begin{abstract}
We propose a cumulative feedback-based ARQ (CF ARQ) protocol for a sliding window of size $2$ over packet erasure channels with unreliable feedback. We exploit a matrix signal-flow graph approach to analyze probability-generating functions of transmission and delay times. Contrasting its performance with that of the uncoded baseline scheme for ARQ, developed by Ausavapattanakun and Nosratinia, we demonstrate that CF ARQ can provide significantly less average delay under bursty feedback, and gains up to about 20\% in terms of throughput. We also outline the benefits of CF ARQ under burst errors and asymmetric channel conditions. The protocol is more predictable across statistics, hence is more stable. This can help design robust systems when feedback is unreliable. This feature may be preferable for meeting the strict end-to-end latency and reliability requirements of future use cases of ultra-reliable low-latency communications in 5G, such as mission-critical communications and industrial control for critical control messaging.
\end{abstract}

\maketitle

%%%%%%%%%%%%%%%%%%%%%%%%%%%%%%%%%%%%%%%%%%%%%%%%%%%%%%%%%%%%%%%%%%%%%%%%%%%%%%%%%%%%%%%%%%%%%%%%%%%%%%%%%%%%%%%%%%%%%%%%%%

\section{Introduction}
\label{intro}
Ultra reliability and low latency in 5G are key factors for many applications ranging from industrial automation, tactile Internet, remote healthcare, public safety, to mission-critical communications such as autonomous driving and wearable computing devices \cite{Popetal2018,Fettweis2014,3GPP2018Mar}. 5G will need to support a round-trip time (RTT) of about 1 millisecond, along with necessary overheads for resource allocation and access in 5G networks. Such severe latency constraints introduce a plethora of challenges in terms of the protocol stack design, control/user plane, and the core network \cite{Andrews2014}. 

Repetition of a packet over non-deterministic channel conditions, and the use of forward error correction (FEC) codes help repair the loss of the packets. Feedback packets are used to request FEC retransmission for increasing the reliability in packet delivery. The role of feedback is to increase data channel efficiency by limiting the repetitions. However, coding and feedback have been difficult to blend. 

Reliable communication over a packet erasure channel can be achieved using Automatic Repeat reQuest (ARQ), when there is full feedback \cite{SunShaMed2008}. This simple scheme achieves 100\% throughput, in-order delivery and the lowest possible packet delay, and it is composable across links. However, when the network is lossy, i.e., with no idealized feedback, link-by-link ARQ cannot achieve the capacity of a general network. 

In the literature, the achievable rate has been optimized using acknowledgments and coding, under the condition that each received packet is either useless or can be immediately decoded by the destination \cite{KatRahHuKatMedCrow2006}. Feedback and coding over a broadcast erasure channel have been combined in \cite{KelDriFra2008} to optimize decoding delay when perfect feedback is available from the receivers. An extension of ARQ for coded networks has been proposed in \cite{SunShaMed2008} to minimize the queue size at the transmitter. This approach combines the benefits of network coding and ARQ by acknowledging degrees of freedom (DoF) instead of original packets. It enables the feedback-based control of the tradeoff between throughput and decoding delay \cite{FraLunMedPak2007}. The proposed scheme in \cite{SunShaMed2008} is robust to delayed or imperfect feedback. None of these examples jointly investigate the delay and throughput when the feedback is imperfect. 

For schemes requiring feedback, it is generally assumed that feedback is lossless (perfect) and instantaneous (delay-free) \cite{SunShaMed2008}, \cite{FraLunMedPak2007}, \cite{PakFraSho2005}. Inevitable feedback channel impairments may cause unreliability in packet delivery. Burst errors might occur, which can impede the stability. The situation becomes worse under round-trip time (RTT) fluctuations along with the delayed feedback. To the best of our knowledge, the effect of unreliable feedback has not been captured before. 

In this paper, we investigate the effect of unreliable feedback in packet erasure channels. Erasure errors can occur in both the forward and reverse channels. However, an acknowledgment (ACK) cannot be decoded as a negative acknowledgment (NACK), and vice versa. Building on the uncoded baseline scheme proposed in \cite{AusNos2007}, we propose a SR ARQ scheme under a cumulative feedback-based ARQ (CF ARQ) scheme in order to investigate the role of feedback. We investigate how much we can gain with cumulative feedback and how to compensate the forward errors with cumulative feedback. Contrasting the throughput and delay performance of CF ARQ with the uncoded ARQ in \cite{AusNos2007}, we demonstrate that with a sliding window of size 2, CF ARQ can provide gains up to 18\% in terms of throughput. Cumulative feedback also has benefits under burst errors or high erasure rates.

%%%%%%%%%%%%%%%%%%%%%%%%%%%%%%%%%%
\section{Channel Model} 
\label{model}
We have a point-to-point channel model consisting of a sender and a receiver. In the forward link, the sender attempts to transmit a packet to the receiver, and upon the successful reception of the packet, in the reverse link, the receiver acknowledges the sender by transmitting a feedback. We use a Gilbert-Elliott (GE) model\footnote{A general finite-state Markov model can be used to represent a physical channel with fading. The received signal-to-noise ratio can be partitioned into a finite number of states, corresponding to different channel qualities \cite{ZhaKas1999}.} \cite{Elliott1963}, which is a special case of hidden Markov models (HMMs), both for the forward and reverse channels. The status of a transmission at time $t$ is a random variable taking values in $\mathcal{X}=\{0,1\}$, where $0$ denotes an error-free packet, and $1$ means the packet is erroneous. This binary-state Markov process $S_t$, with probability transition matrix $\matr{P}$, has states G (good) and B (bad), i.e. $\mathcal{S}=\{G,B\}$, with $\bm{\epsilon}=[\epsilon_G, \epsilon_B]$ where $\epsilon_G$ and $\epsilon_B$ are the probabilities of transmitting a packet in error in the respective states. The GE channel $X_t$, driven by $S_t$, is characterized by $\{\mathcal{S},\mathcal{X},\matr{P},\bm{\epsilon}\}$. 

The channel state information is not available at the transmitter and the receiver. Hence, the transmitter does not know the status of a transmission (state of the forward link) at time $t$, but it observes the status of the feedback at time $t-1$, which is a Bernoulli random variable taking values in $\mathcal{X} = \{0,1\}$. Similarly, the receiver does not know the status of the reverse link, but it observes the status of a transmission at time $t$, which is a Bernoulli random variable taking values in $\mathcal{X} = \{0,1\}$. The transmitter and receiver do not observe the process $\mathcal{X}^{(c)}$. However, for the GE channel, given the channel state at time $t-1$, the joint probabilities of channel state and observation at time $t$ can be computed using the state-transition probabilities. For a GE channel, the state-transition matrix is 
\begin{align} 
\label{TransMatrix}
\matr{P}=
\begin{bmatrix}
    1-q & q  \\
    r & 1-r
\end{bmatrix},
\end{align}
where the first and second rows correspond to states G and B. The erasure rate is $\epsilon=\pi\bm{\epsilon}^\intercal$, where $\pi=[\pi_G,\,\, \pi_B]$ is the stationary vector of $\matr{P}$, which is found by solving $\pi \matr{P} = \pi$ and $\pi \matr{1} = 1$. Note that $1/r$ represents the average error burst. Hence, burst errors occur when $r$ is low. 

The joint probabilities of channel state and observation at time $t$, given the channel state at time $t-1$, are given as
\begin{align}
\mathbb{P}(S_t=j, X_t=1 \vert S_{t-1}=i)=p_{ij}\epsilon_j,\nonumber
\end{align}
which can be collected into a matrix of transition probabilities $\matr{P}_1=\matr{P}\cdot {\rm diag}\{\bm{\epsilon}\}$. Similarly, define $\matr{P}_0=\matr{P}\cdot {\rm diag}\{\matr{1}-\bm{\epsilon}\}$. The entries in matrices $\matr{P}_0$ and $\matr{P}_1$ are state-transition probabilities when viewed jointly with the conditional channel observations \cite{AusNos2007}. Hence, the HMM is characterized by $\{\mathcal{S},\mathcal{X},\matr{P}_0,\matr{P}_1\}$. 

In practice, data packets and acknowledgments typically have different lengths and different coding levels. Therefore, the erasure rates $\bm{\epsilon}$ and the parameters $r$ and $q$ of the forward and reverse channels are not necessarily the same, which is accounted in our model. Denote by $\matr{P}^{(f)}$ and $\matr{P}^{(r)}$ the state-transition matrices for the forward and reverse channels, respectively. The forward link $\{\mathcal{S}^{(f)},\mathcal{X}^{(f)},\matr{P}_0^{(f)},\matr{P}_1^{(f)}\}$ and the reverse link $\{\mathcal{S}^{(r)},\mathcal{X}^{(r)},\matr{P}_0^{(r)},\matr{P}_1^{(r)}\}$ are mutually independent. 
 
For the GE channel, the probability matrices for the forward and reverse channels $\matr{P}^{(f)}_0$ and $\matr{P}^{(r)}_1$ are given as 
\begin{align}
\matr{P}^{(f)}_0&=\matr{P}^{(f)}\cdot {\rm diag}\{\matr{1}-\bm{\epsilon}^{(f)}\} 
=
\begin{bmatrix}
    \bar{q}^{(f)}\bar{\epsilon}^{(f)}_G & q^{(f)}\bar{\epsilon}^{(f)}_B  \\
    r^{(f)}\bar{\epsilon}^{(f)}_G & \bar{r}^{(f)}\bar{\epsilon}^{(f)}_B
\end{bmatrix},\nonumber
\end{align}
\begin{align}
\matr{P}^{(r)}_1&=\matr{P}^{(r)}\cdot {\rm diag}\{\bm{\epsilon}^{(r)}\} 
=
\begin{bmatrix}
    \bar{q}^{(r)}\epsilon^{(r)}_G & q^{(r)}\epsilon^{(r)}_B  \\
    r^{(r)}\epsilon^{(r)}_G & \bar{r}^{(r)}\epsilon^{(r)}_B
\end{bmatrix}\nonumber
\end{align} 
using the shorthand notation $\bar{q} = 1 - q$, $\bar{r} = 1 - r$, $\bar{\epsilon}_G=1-\epsilon_G$ and $\bar{\epsilon}_B=1-\epsilon_B$. We can similarly compute $\matr{P}^{(f)}_1$ and $\matr{P}^{(r)}_0$. 

The composite channel is characterized by $\{\mathcal{S}^{(c)},\mathcal{X}^{(c)},\matr{P}_{00}^{(c)},\matr{P}_{01}^{(c)},\matr{P}_{10}^{(c)},\matr{P}_{11}^{(c)}\}$, where $\mathcal{S}^{(c)}=\mathcal{S}^{(f)}\times \mathcal{S}^{(r)}$ are the composite channel states, i.e. the Cartesian product of forward and reverse states, and $\mathcal{X}^{(c)}=\mathcal{X}^{(f)}\times \mathcal{X}^{(r)}=\{00,01,10,11\}$ is the combined observation set. For example, $X_t^{(c)}=10$ means the forward channel is erroneous and the reverse channel is good. For $X_t^{(c)}=11$, the joint probability of the combined observation and the composite state at time $t$, given the composite state at time $t-1$, is $(p_{ij}^{(f)}\epsilon_j^{(f)})\cdot (p_{km}^{(r)}\epsilon_m^{(r)})$. In compact notation, we have $\matr{P}_{ij}^{(c)}=\matr{P}_{i}^{(f)} \otimes \matr{P}_{j}^{(r)}$ for $X_t^{(c)}=ij$, where $\otimes$ is the Kronecker product of matrices and $i,j=0,1$. For the GE channel, the combined observation probabilities are given by the following $4\times 4$ matrices: $\matr{P}_{00}^{(c)}=\matr{P}^{(f)}_0 \otimes \matr{P}^{(r)}_0$, $\matr{P}_{01}^{(c)}=\matr{P}^{(f)}_0 \otimes \matr{P}^{(r)}_1$, $\matr{P}_{10}^{(c)}=\matr{P}^{(f)}_1 \otimes \matr{P}^{(r)}_0$, and $\matr{P}_{11}^{(c)}=\matr{P}^{(f)}_1 \otimes \matr{P}^{(r)}_1$. The combined state-transition matrix for the GE channel, i.e., $\matr{P}^{(c)}$, is a $4\times 4$ matrix that is given by the Kronecker product of $\matr{P}^{(f)}$ and $\matr{P}^{(r)}$, i.e. $\matr{P}^{(c)}=\matr{P}^{(f)} \otimes \matr{P}^{(r)}$. 

In the rest of the paper, we will drop the superscript $^{(c)}$ and denote the $4\times 4$ observation probability matrices by $\matr{P}_{00}$, $\matr{P}_{01}$, $\matr{P}_{10}$ and $\matr{P}_{11}$. We also let $\matr{P}_{0x}=\matr{P}_{00}+\matr{P}_{01}$ and $\matr{P}_{1x}=\matr{P}_{10}+\matr{P}_{11}$ be the probability matrices of success and error in the forward channel, respectively, and let $\matr{P}_{x0}=\matr{P}_{00}+\matr{P}_{10}$ and $\matr{P}_{x1}=\matr{P}_{01}+\matr{P}_{11}$ be the matrices of success and error in the reverse channel, respectively.  Furthermore, we let the matrices $\matr{P}$, $\matr{P}_0$, $\matr{P}_1$ denote the $4\times 4$ composite channel matrices.

%%%%%%%%%%%%%%%%%%%%%%%%%%%%%%%%%%
\section{ARQ with Cumulative Feedback}
\label{CFcoding}
We propose a cumulative feedback-based ARQ (CF ARQ) scheme with coding for data transmission. It is an extension of the slotted SR ARQ, which allows the receiver to accept packets out of order, which can be stored in a buffer and sorted at the receiver to ensure in-order final delivery. Assume that all packets are available at the sender prior to transmission, the receiver does not have buffer overflows, and there is a synchronous transmission from the sender to the receiver.

We consider minimum coding, i.e., with a sliding window of size $M=2$. The protocol can easily be generalized to packet streams with $M>2$, which is out of the scope of the current paper. This scheme differs from the uncoded ARQ in \cite{AusNos2007} in the sense that the transmitted packet stream is MDS coded, and the feedback is cumulative for $M=2$ coded packets. However, the transmission scheme is repetition-based, i.e. the transmission rate is not adjusted based on the cumulative feedback. The receiver needs both coded packets to reconstruct the transmitted packet stream, i.e., the degrees of freedom (DoF) required at the receiver is $N=2$. We do not assume in-order packet delivery. Hence, the transmitted stream will be successfully decoded when both of the coded packets are successfully received and acknowledged by the receiver. 

The feedback, i.e. ACK and NACK messages sent by the receiver indicating if it has correctly received a data packet, acknowledges all correctly received packets, and is cumulative for $M=2$ coded packets. After the start of transmission (I), it takes $k-1$ time slots between the transmission of the second packet and receipt of its feedback. Therefore, the round-trip time (RTT) of CF ARQ is ${\rm RTT}=k+M-1=k+1$ slots. If the feedback was not cumulative, i.e., the first feedback was received $k-1$ slots after the transmission of the first packet, then the RTT would have been $k$ slots. A timeout mechanism is used at the transmitter to achieve reliable data transmission. When a packet stream is (re)transmitted, the timeout is set to $T$ that is greater than RTT. If the sender does not receive an acknowledgment before the timeout, it retransmits the packets until it receives an acknowledgment. Hence, we do not have an upper bound on the maximum number of retransmissions. 

The ACK/NACK sent in each slot. The packet whose ACK is lost will be acknowledged by subsequent ACKs/NACKs. If the succeeding ACKs/NACKs are successfully received before timer expiration, the packet will not be retransmitted. If the timeout expires and no ACK is received, the packet will be retransmitted. When a packet is lost and its NACK is received, the packet will be retransmitted immediately. If the NACK is also lost, the packet will be retransmitted after the timer expires. The transmission protocol is illustrated in Fig. \ref{protocol}. 

\begin{figure}[t!]
\centering
\includegraphics[width=\columnwidth]{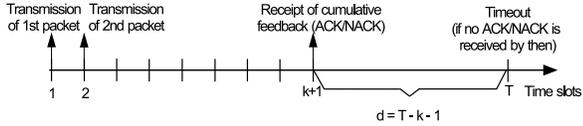}
\caption{\small{CF ARQ protocol description.}\label{protocol}}
\end{figure} 

The combined observation set for CF ARQ with $M=2$ packets is all 3-tuples of $\mathbb{Z}_2=\{0,1\}$, i.e., $\mathcal{X}^{(c)}=\mathbb{Z}_2^3$. For example, $X_t^{(c)}=001$ means that the forward channel is good for both packets and the reverse channel is erroneous, i.e., the ACK for both packets is lost at time $t$. Since the feedback is cumulative for $M=2$ packets, it is possible that both packets are successfully acknowledged, or they both need to be transmitted or only one of the packets has to be retransmitted. 

Hidden Markov model (HMM) is a statistical Markov process with unobserved states. Although the state is not directly observed, the output dependent on the state can be observed. Thus, under unreliable channel conditions, the analysis of ARQ protocol is possible using HMMs. The analysis of finite-state HMMs can be streamlined using flow graphs. Scalar-flow graphs have been used to find the probability-generating functions (PGFs) of transmission and delay times \cite{LuChang1993,ChoUn1994,LuChang1989}.  

HMMs can be analyzed by labeling the branches of scalar-flow graphs with observation probability matrices. The nodes of the flow graphs correspond to the states of the transmitter. The input node ($I$) represents the start of transmission, and the output node ($O$) represents correct reception of acknowledgment. Other nodes represent intermediate states. Upon the start of transmission, the transmitter goes from one state to the other. A state transition is accompanied with a certain value for the random variable $X$, and a probability $p$, which together appear in the branch gain $pz^X$. Hence, the input-output gain of the graph is a polynomial in $z$, whose coefficients are the probabilities of corresponding values of $X$. This polynomial is equivalent to $\mathbb{E}[z^X]$, the PGF for $X$. Flow graphs with matrix branch transmissions and vector node values are called matrix signal-flow graphs (MSFGs) \cite{AusNos2007}. The matrix gain of the graph is calculated using the basic equivalences known as parallel, series, and self-loop. The matrix-generating function (MGF) $\mathbf{\Phi}(z)$ gives the input-output relationship for the matrix-flow graph. Then, the PGF is calculated by pre- and postmultiplications of row and column vectors, respectively.

\begin{figure}[t!]
\centering
\includegraphics[width=\columnwidth]{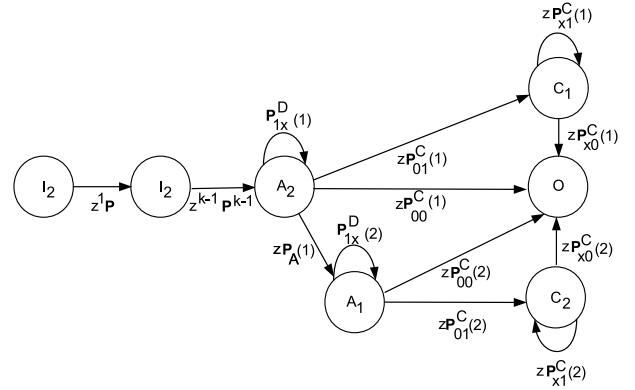}
\caption{\small{Matrix-flow graph for delay of CF ARQ in imperfect feedback.}
\label{SR_ARQ_CF}}
\end{figure}

The HMM for delay analysis of CF ARQ is shown in Fig. \ref{SR_ARQ_CF}.  The states $I$ and $O$ are the input and output nodes, and nodes $A_1$, $A_2$, $C_1$, $C_2$ represent the hidden states. The possibilities upon the transmission of $M=2$ coded packets are:
\begin{itemize}[leftmargin=*]
\item {\bf Transition to state $A_2$.} Node $A_2$ denotes the reception of the first feedback. The coded packets are retransmitted until the forward link is successful and at least one packet is successfully transmitted. The retransmission is modeled by the self-loop at $A_2$, where
\begin{align}
\matr{P}_{1x}^{\rm D}(1)= z^{\rm RTT}\matr{P}^{\rm RTT}(z\matr{P}_{10}^{\rm C}(1)+z\matr{P}_{11}^{\rm C}(1)z^d \matr{P}^d),\nonumber
\end{align}
where $d=T-{\rm RTT}$ is the residual time for timer expiration upon transmission. Upon the reception of the first feedback, the transition probability matrix $\matr{P}_{10}^{\rm C}(1)$ for the transmission of $M=2$ packets is given by
\begin{align}
\matr{P}_{10}^{\rm C}(1)&=\matr{P}_{10}\matr{P}_{10}+\matr{P}_{10}\matr{P}_{01}+\matr{P}_{01}\matr{P}_{10}, \nonumber
\end{align}
which models the error-free NACK. It combines the different cases such that the feedback is an error-free NACK, i.e., the forward link was bad for both packets and the reverse link was good (first term), or the forward link was bad for either one of the packets only and the reverse link was good (second and third terms). We assume the cumulative feedback is error-free as long as the reverse link is good before the forward transmission is over.

The transition probability matrix $\matr{P}_{11}^{\rm C}(1)$ is given by  
\begin{align}
\matr{P}_{11}^{\rm C}(1)&=\matr{P}_{11}\matr{P}_{11}+\matr{P}_{11}\matr{P}_{10}+\matr{P}_{10}\matr{P}_{11},\nonumber
\end{align}
which models the erroneous NACK feedback. It combines the different cases such that the forward link was bad for both packets and the reverse link (CF) was also bad.

In CF ARQ, unless both packets are successfully acknowledged, we always need retransmissions. Hence, it is suboptimal. Furthermore, the erasure rate of CF ARQ is not the same as the erasure rate of uncoded ARQ. For example, for the case of symmetric memoryless channels, the relationship between the erasure rate $\epsilon_{\rm CF}$ for CF ARQ with $M=2$ packets, and the erasure rate $\epsilon$ of the uncoded ARQ in \cite{AusNos2007} is computed as $\epsilon_{\rm CF} = \sqrt{\epsilon^4+2\epsilon^3 (1-\epsilon)}$. Hence, $\epsilon_{\rm CF}(1)\geq \epsilon^2$.

\item {\bf Transition to state $A_1$.} When the first feedback is received at node $A_2$, if the number of DoFs acknowledged equals $1$, then the system transits to state $A_1$. The matrix 
\begin{align}
\matr{P}_A(1)=\matr{P}_{00}\matr{P}_{10}+\matr{P}_{10}\matr{P}_{00}\nonumber
\end{align} 
denotes the transition probability matrix from $A_2$ to $A_1$. Hence, if the system goes into state $A_1$, the additional number of DoFs required by the receiver is $1$, i.e., only one packet needs to be retransmitted. The packet retransmission at $A_1$ is modeled by the self-loop, where
\begin{align}
\matr{P}_{1x}^{\rm D}(2)= (z\matr{P})^{{\rm RTT}-1}
(z\matr{P}_{10}^{\rm C}(2)+z\matr{P}_{11}^{\rm C}(2)z^{d+1}\matr{P}^{d+1}),\nonumber
\end{align}
where the probability matrices $\matr{P}_{10}^{\rm C}(2)$ and $\matr{P}_{11}^{\rm C}(2)$ model the error-free and the erroneous NACK, respectively. At node $A_1$, as only one packet is retransmitted, the matrices satisfy $\matr{P}_{xy}^{\rm C}(2)=\matr{P}_{xy}$, where $\matr{P}_{xy}$'s, for $x,y\in\{0,1\}$ are the transition probability matrices for the uncoded ARQ in \cite{AusNos2007}. 

\item {\bf Transition to state $O$.} If $N=2$ DoF's are received, the stream can be successfully decoded. If $N=2$ DoF's are acknowledged (with probability $\matr{P}_{00}^{\rm C}(1)=\matr{P}_{00}\matr{P}_{00}$), the system transits to state $O$. 

\item {\bf Transition to state $C_1$.} If $N=2$ DoF's are received, but the feedback is an erroneous ACK (with probability $\matr{P}_{01}^{\rm C}(1)=\matr{P}_{01}\matr{P}_{01}+\matr{P}_{01}\matr{P}_{00}+\matr{P}_{00}\matr{P}_{01}$), then the system transits to $C_1$, where the sender waits till it receives an error-free ACK/NACK, modeled by the self-loop at $C_1$. 

\item {\bf Transition to state $C_2$.}  If $N=2$ DoF's are received, but only one packet is successfully acknowledged and the feedback for the other packet is an erroneous ACK (with probability $\matr{P}_{01}^{\rm C}(2)$), then the system transits to $C_2$, where the sender waits till it receives an error-free ACK/NACK.  
\end{itemize}

Given the transition probabilities, the success and error probability matrices in the reverse channel for CF ARQ are:
\begin{align}
\matr{P}_{x0}^{\rm C}(n)&=\matr{P}_{00}^{\rm C}(n)+\matr{P}_{10}^{\rm C}(n),\nonumber\\
\matr{P}_{x1}^{\rm C}(n)&=\matr{P}_{01}^{\rm C}(n)+\matr{P}_{11}^{\rm C}(n),\,\, n\in\{1,2\}, \nonumber
\end{align}
respectively, where $n-1$ is the number of DoFs acknowledged by the receiver, i.e., $2-(n-1)$ DoFs are needed at the receiver.

\begin{figure*}[t!]
\centering
\includegraphics[width=0.4\textwidth]{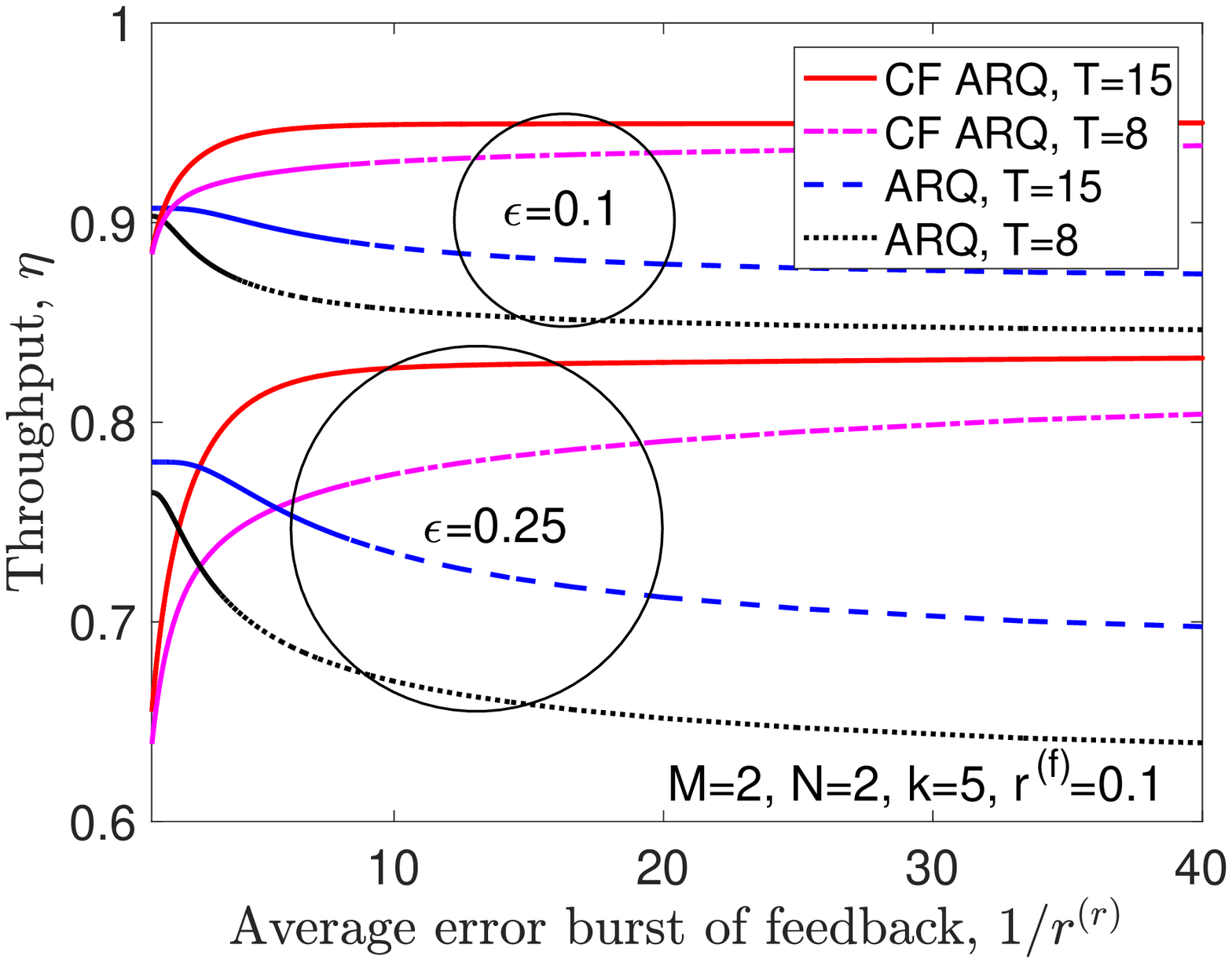}
\hspace{1.5cm}
\includegraphics[width=0.4\textwidth]{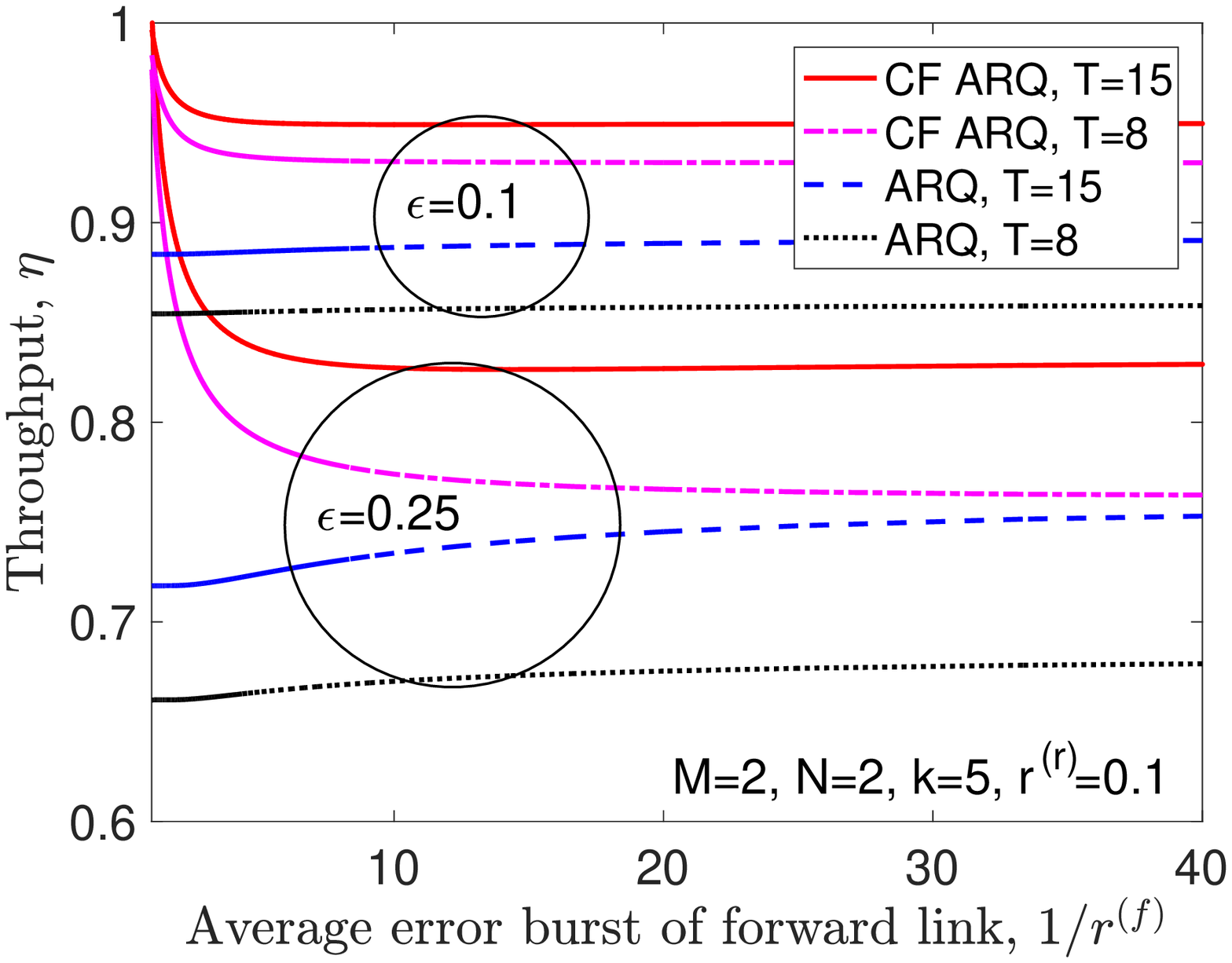}
\includegraphics[width=0.4\textwidth]{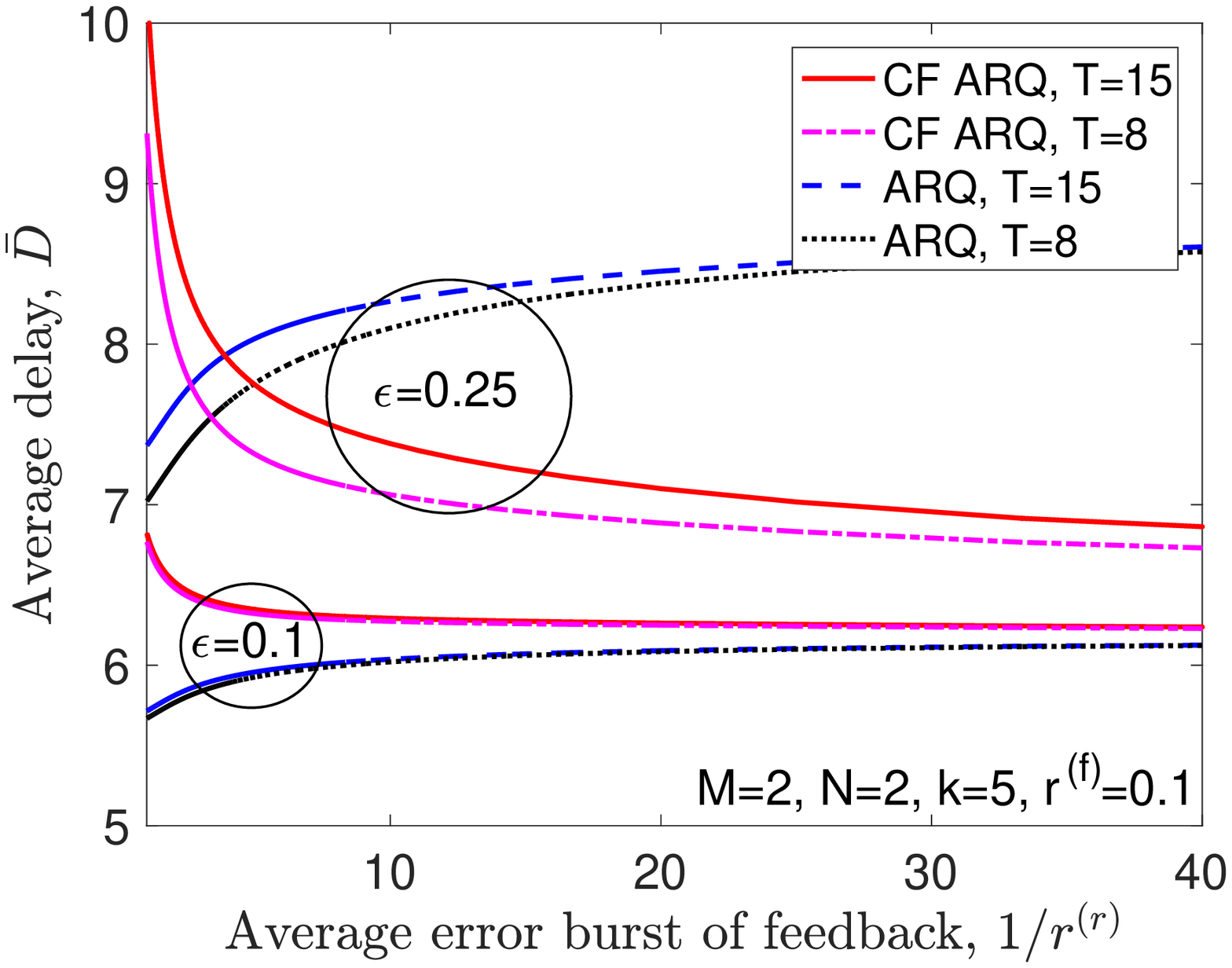}
\hspace{1.5cm}
\includegraphics[width=0.4\textwidth]{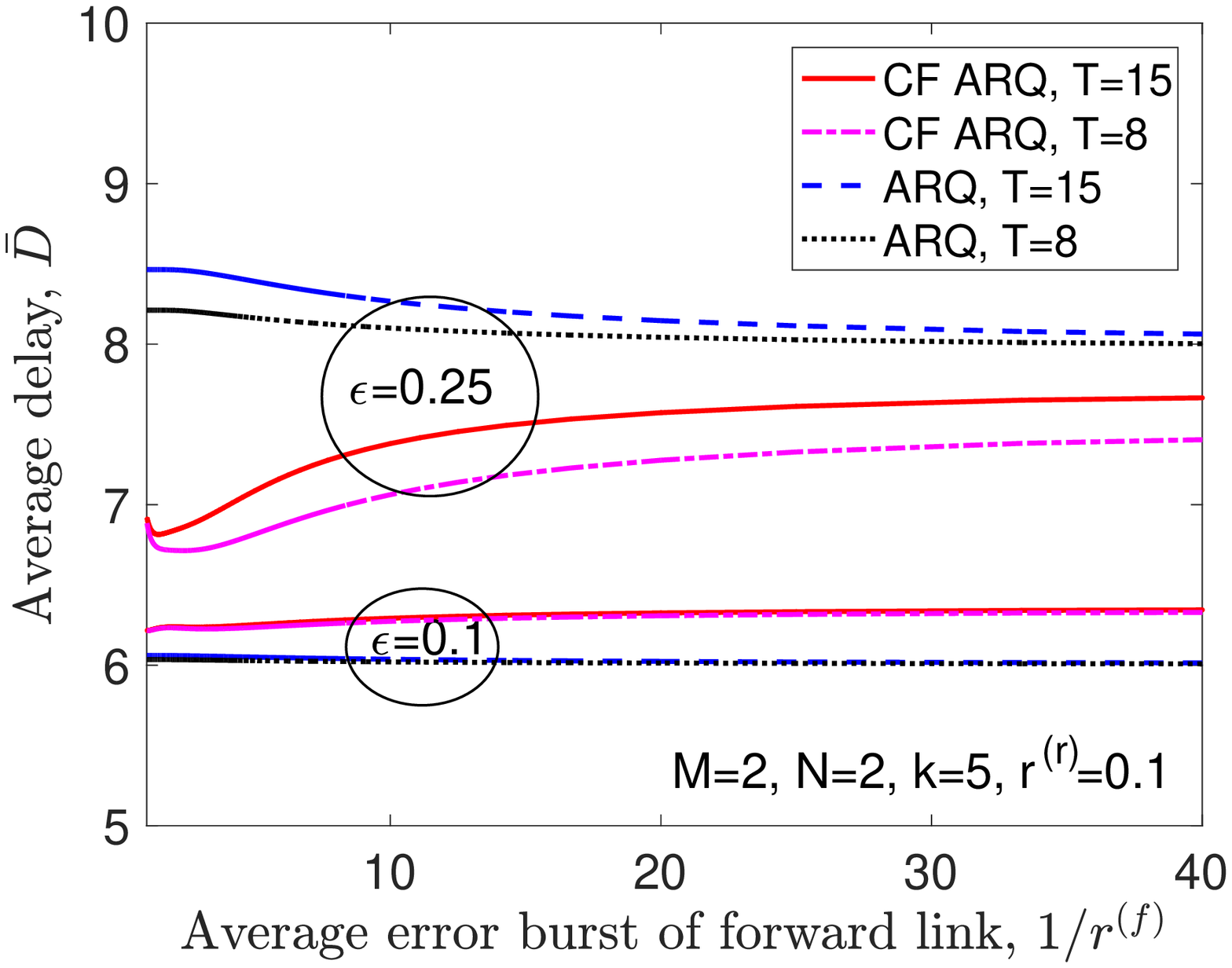}
\caption{\small{Throughput and average delay under different burst models given fixed erasure rates: $\epsilon^{(f)}=\epsilon^{(r)}=\epsilon$.}\label{SymmetricChannels}}
\end{figure*}

The transmission time ($\tau$) is the number of frames being transmitted per a successful frame, and the delay time ($D$) is the time from when a frame is first transmitted to when its ACK is received. Under the given model, both $\tau$ and $D$ are random variables with positive integer outcomes. The matrix gain of the graph in Fig. \ref{SR_ARQ_CF} can be calculated. Using the PGF, the average values for $\tau$ and $D$ can be calculated and the throughput is the reciprocal of $\tau$. We next compute the MGF of the transmission time for $M=2$ packets.

%MGF OF THROUGHPUT
\begin{prop}
\label{CF-ARQthroughput} 
For CF ARQ, the MGF of $\tau$ is given by
\begin{align}
\label{MGFtransmissiontimeCoding}
\matr{\Phi}_{\tau}(z)&=z\matr{P}^{k}\Big[(\matr{I}-\matr{P}_{1x}^{\mathcal{T}}(1))^{-1}\matr{A}_1(z)\nonumber\\
&+\matr{P}_A(1){\prod\nolimits_{i=1}^2}{(\matr{I}-\matr{P}_{1x}^{\mathcal{T}}(i))^{-1}\matr{A}_2(z)}\Big],
\end{align}
where
\begin{align}
\matr{P}_{1x}^{\mathcal{T}}(1)=z(\matr{P}_{10}^{\rm C}(1)+\matr{P}_{11}^{\rm C}(1)(\matr{P}^2)^{d})\matr{P}^{k+1}\nonumber\\
\matr{P}_{1x}^{\mathcal{T}}(2)=z(\matr{P}_{10}^{\rm C}(2)+\matr{P}_{11}^{\rm C}(2)\matr{P}^{d-1})\matr{P}^{k}.\nonumber
\end{align}
and the matrix $\matr{A}_n(z)$ for $n=\{1,2\}$ that gives the gain of the transition from the state $A_{2-(n-1)}$ can be computed as
\begin{align}
\matr{A}_n(z)=\matr{P}_{00}^{\rm C}(n)+\matr{P}_{01}^{\rm C}(n)\left[\sum\nolimits_{i=1}^{d_n}{\matr{P}_{x1}^{\rm C}(n)^{i-1}\matr{P}_{x0}^{\rm C}(n)}\right.\nonumber\\
\left.+\matr{P}_{x1}^{\rm C}(n)^{d_n} (\matr{I}-z\matr{P}_{x1}^{\rm C}(n)^T)^{-1}z\sum\nolimits_{i=0}^{T-1}{\matr{P}_{x1}^{\rm C}(n)^i} \matr{P}_{x0}^{\rm C}(n)\right],\nonumber
\end{align}
where $d_n=d-(n-1)$.  
\end{prop}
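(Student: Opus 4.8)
The plan is to collapse the matrix signal-flow graph (MSFG) of Fig.~\ref{SR_ARQ_CF} into a single input--output matrix gain by repeated application of the three MSFG equivalences (series, parallel, self-loop) in the MGF framework of \cite{AusNos2007}, but with the branch labels chosen so that the exponent of $z$ counts \emph{transmitted frames} rather than elapsed slots. Concretely, each traversal of a retransmission self-loop contributes a single factor $z$, while the interspersed $\matr{P}$-powers (of orders $k$, $k+1$, $d$, $d-1$) merely propagate the composite channel state through the RTT and timeout intervals without incrementing $\tau$. There are exactly two ways to leave $A_2$ and reach $O$: the route $I\to A_2\to O$ and the route $I\to A_2\to A_1\to O$. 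Both share the entry gain $z\matr{P}^{k}$ from $I$ to the first-feedback node $A_2$ (the single $z$ being the first transmission, $\matr{P}^{k}$ propagating the state to $A_2$), so the total gain is the parallel (sum) of the two route gains after this common prefactor.

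First I would resolve the two retransmission self-loops. At $A_2$ the loop is taken whenever the forward transmission delivers no DoF with a clean NACK, so its per-iteration transmission-time label is $\matr{P}_{1x}^{\mathcal{T}}(1)=z(\matr{P}_{10}^{\rm C}(1)+\matr{P}_{11}^{\rm C}(1)(\matr{P}^2)^{d})\matr{P}^{k+1}$, assembled from the error-free and erroneous NACK matrices $\matr{P}_{10}^{\rm C}(1)$, $\matr{P}_{11}^{\rm C}(1)$ defined earlier, together with the state evolution over the timeout residual $d$ and the next RTT. The self-loop equivalence replaces it by the matrix geometric series $(\matr{I}-\matr{P}_{1x}^{\mathcal{T}}(1))^{-1}$; the identical reduction at $A_1$, where only one DoF is outstanding so that $\matr{P}_{xy}^{\rm C}(2)=\matr{P}_{xy}$, yields $(\matr{I}-\matr{P}_{1x}^{\mathcal{T}}(2))^{-1}$ with $\matr{P}_{1x}^{\mathcal{T}}(2)=z(\matr{P}_{10}^{\rm C}(2)+\matr{P}_{11}^{\rm C}(2)\matr{P}^{d-1})\matr{P}^{k}$. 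The transition $A_2\to A_1$ carries the single-DoF acknowledgment matrix $\matr{P}_A(1)$.

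The step I expect to be the crux is computing the exit gains $\matr{A}_n(z)$, which must fold the auxiliary waiting nodes $C_1$ (for $n=1$) and $C_2$ (for $n=2$) into the edge reaching $O$. The clean-ACK branch contributes $\matr{P}_{00}^{\rm C}(n)$ directly. The erroneous-ACK branch, entered with gain $\matr{P}_{01}^{\rm C}(n)$, routes into the $C$-node, where the sender idles until a clean ACK/NACK arrives: this produces a \emph{finite} truncated geometric sum $\sum_{i=1}^{d_n}\matr{P}_{x1}^{\rm C}(n)^{i-1}\matr{P}_{x0}^{\rm C}(n)$ over the $d_n=d-(n-1)$ pre-timeout slots (each repeated erroneous feedback carrying $\matr{P}_{x1}^{\rm C}(n)$ and the eventual clean one $\matr{P}_{x0}^{\rm C}(n)$), together with a \emph{second} self-loop for the timeout-triggered retransmission after $d_n$ consecutive failures, resolved as $\matr{P}_{x1}^{\rm C}(n)^{d_n}(\matr{I}-z\matr{P}_{x1}^{\rm C}(n)^{T})^{-1}z\sum_{i=0}^{T-1}\matr{P}_{x1}^{\rm C}(n)^{i}\matr{P}_{x0}^{\rm C}(n)$. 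Assembling these gives the stated $\matr{A}_n(z)$; the bookkeeping obstacle is keeping the nested finite-wait and infinite-retransmit series distinct and attaching exactly one $z$ per genuine (re)transmission while letting the pure waiting slots advance the channel state through $\matr{P}_{x1}^{\rm C}(n)$ alone.

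Finally I would recombine by the series/parallel rules: the route $I\to A_2\to O$ gives $z\matr{P}^{k}(\matr{I}-\matr{P}_{1x}^{\mathcal{T}}(1))^{-1}\matr{A}_1(z)$, while the route through $A_1$ multiplies, in flow order, the entry gain $z\matr{P}^{k}$, the $A_2$ self-loop factor $(\matr{I}-\matr{P}_{1x}^{\mathcal{T}}(1))^{-1}$, the branch $\matr{P}_A(1)$, the $A_1$ self-loop factor $(\matr{I}-\matr{P}_{1x}^{\mathcal{T}}(2))^{-1}$, and the exit $\matr{A}_2(z)$; summing the two routes yields \eqref{MGFtransmissiontimeCoding}. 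The one point demanding care in the write-up is that these matrix gains do not commute, so the factors must be ordered strictly along the signal-flow direction $I\to A_2\to A_1\to O$; matching the compact product notation $\prod_{i=1}^2(\matr{I}-\matr{P}_{1x}^{\mathcal{T}}(i))^{-1}$ of the statement then just requires verifying how $\matr{P}_A(1)$ is positioned relative to the common $A_2$-loop factor shared by both branches.
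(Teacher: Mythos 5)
Your reduction of the matrix signal-flow graph by the series/parallel/self-loop equivalences, with the self-loops at $A_2$ and $A_1$ resolved into $(\matr{I}-\matr{P}_{1x}^{\mathcal{T}}(i))^{-1}$, the waiting nodes $C_1,C_2$ folded into the exit gains $\matr{A}_n(z)$, and the two routes $I\to A_2\to O$ and $I\to A_2\to A_1\to O$ summed in parallel, is exactly the derivation the paper intends: the proposition is stated without a separate proof, resting on the preceding description of the transitions and the MSFG calculus of \cite{AusNos2007}, which is precisely what you reconstruct. Your closing caveat about the placement of $\matr{P}_A(1)$ among non-commuting factors is well taken --- the flow-ordered gain of the second route is $z\matr{P}^{k}(\matr{I}-\matr{P}_{1x}^{\mathcal{T}}(1))^{-1}\matr{P}_A(1)(\matr{I}-\matr{P}_{1x}^{\mathcal{T}}(2))^{-1}\matr{A}_2(z)$, and the proposition's $\matr{P}_A(1)\prod\nolimits_{i=1}^{2}(\matr{I}-\matr{P}_{1x}^{\mathcal{T}}(i))^{-1}$ must be read as compact shorthand for that ordering.
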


The PGF of $\tau$ of CF ARQ for $M=2$ packets is computed as $\Phi_{\tau}(z)=\pi_I\matr{\Phi}_{\tau}(z)\matr{1}/(\pi_I\matr{1})$ using the MGF $\matr{\Phi}_{\tau}(z)$ in (\ref{MGFtransmissiontimeCoding}), where $\matr{1}$ is a column vector of ones, and $\pi_I=\pi \matr{P}_0$ is the probability vector of state $I$. The throughput $\eta$ is the reciprocal of the derivative of $\Phi_{\tau}(z)$ at $z=1$, i.e., $\eta=1/\Phi_{\tau}'(1)$.

We next compute the MGF of the delay for $M=2$ packets.

%MGF OF DELAY
\begin{prop}\label{CF-ARQdelay}
For CF ARQ, the MGF of the delay is 
\begin{align}
\label{MGFdelayCoding}
\matr{\Phi}_D(z)&=z^{k}\matr{P}^{k}\Big[ (\matr{I}-\matr{P}_{1x}^{\rm D}(1))^{-1}\matr{B}_1(z)\nonumber\\
&+z\matr{P}_A(1)\prod\nolimits_{i=1}^2 (\matr{I}-\matr{P}_{1x}^{\rm D}(i))^{-1}\matr{B}_2(z)\Big],
\end{align}
where $\matr{B}_n(z)$ for $n\in\{1,2\}$ can be computed using relation
\begin{align}
\matr{B}_n(z)=z\matr{P}_{00}^{\rm C}(n)+z\matr{P}_{01}^{\rm C}(n)(\matr{I}-z\matr{P}_{x1}^{\rm C}(n))^{-1}z\matr{P}_{x0}^{\rm C}(n).\nonumber
\end{align}
\end{prop}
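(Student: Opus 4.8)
The plan is to obtain $\matr{\Phi}_D(z)$ by reducing the matrix signal-flow graph of Fig.~\ref{SR_ARQ_CF} from the input node $I$ to the output node $O$ with the three MSFG equivalences: series paths compose as matrix products, parallel paths as matrix sums, and a self-loop of matrix gain $\matr{G}$ contributes the factor $(\matr{I}-\matr{G})^{-1}=\sum_{m\geq 0}\matr{G}^m$ (the matrix geometric sum over the number of times the loop is traversed). The one structural difference from the transmission-time graph of Proposition~\ref{CF-ARQthroughput} is the choice of branch labels: since $D$ measures the \emph{total elapsed time} until a correct ACK, every slot spent in any branch carries a factor $z$, whereas for $\tau$ only slots in which a frame is actually transmitted carry $z$. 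I would therefore re-label the same graph with delay-weighted gains and then apply the identical reduction used for $\matr{\Phi}_\tau(z)$.

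First I would account for the initial segment $I\to A_2$, whose delay gain is the common prefactor $z^{k}\matr{P}^{k}$: the first cumulative feedback arrives after the associated number of slots, each weighted by $z$, with the composite channel evolving under $\matr{P}$. I factor this out of both resulting terms. Next I would collapse the self-loop at $A_2$ that models retransmission of both coded packets whenever the forward link fails, yielding $(\matr{I}-\matr{P}_{1x}^{\rm D}(1))^{-1}$ with the delay-weighted gain $\matr{P}_{1x}^{\rm D}(1)$ defined before the proposition.

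From $A_2$, with both DoFs received, there are two parallel routes to $O$: the direct correct-ACK branch of gain $z\matr{P}_{00}^{\rm C}(1)$, and the erroneous-ACK branch into $C_1$ of gain $z\matr{P}_{01}^{\rm C}(1)$, followed by the wait self-loop at $C_1$ (each idle slot carrying an erroneous reverse observation and a factor $z$, hence $(\matr{I}-z\matr{P}_{x1}^{\rm C}(1))^{-1}$) and the exit $C_1\to O$ on the first correct feedback, of gain $z\matr{P}_{x0}^{\rm C}(1)$. Summing these two routes reproduces exactly $\matr{B}_1(z)$. The remaining route leaves $A_2$ for $A_1$ on the transition $\matr{P}_A(1)$ (one slot, factor $z$), collapses the self-loop at $A_1$ into $(\matr{I}-\matr{P}_{1x}^{\rm D}(2))^{-1}$, and reaches $O$ through the analogous direct/through-$C_2$ pair whose sum is $\matr{B}_2(z)$. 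Composing these series/parallel pieces and pre-multiplying by $z^{k}\matr{P}^{k}$ gives the two bracketed terms of (\ref{MGFdelayCoding}).

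The step I expect to be most delicate is the $z$-exponent bookkeeping together with the matrix \emph{ordering}, since none of the composite matrices commute. In particular I would have to verify that the RTT and the residual timer $d$ enter the delay self-loops $\matr{P}_{1x}^{\rm D}(1)$ and $\matr{P}_{1x}^{\rm D}(2)$ with the correct powers of $z$ and $\matr{P}$ (the shift reflecting that only one packet is retransmitted from $A_1$), and that the factor $z\matr{P}_A(1)$ and the two inverses $(\matr{I}-\matr{P}_{1x}^{\rm D}(i))^{-1}$ appear in the order dictated by the path $A_2\to A_1\to O$ when written as the product $\prod_{i=1}^2$. Everything else follows mechanically from the reduction already established for $\matr{\Phi}_\tau(z)$ in Proposition~\ref{CF-ARQthroughput}, so once the labels are fixed the derivation mirrors that proof line for line.
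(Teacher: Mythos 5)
Your proposal is correct and follows exactly the route the paper takes: the paper gives no separate written proof for this proposition, but its stated method is precisely the series/parallel/self-loop reduction of the matrix signal-flow graph in Fig.~\ref{SR_ARQ_CF} with delay-weighted branch gains, the prefactor $z^{k}\matr{P}^{k}$ for the $I\to A_2$ segment, the loop inverses $(\matr{I}-\matr{P}_{1x}^{\rm D}(i))^{-1}$ at $A_2$ and $A_1$, and $\matr{B}_n(z)$ arising as the sum of the direct correct-ACK branch and the erroneous-ACK path through $C_n$ with its wait loop $(\matr{I}-z\matr{P}_{x1}^{\rm C}(n))^{-1}$. Your flagged concern about non-commuting matrices and the placement of $z\matr{P}_A(1)$ relative to $\prod_{i=1}^2(\matr{I}-\matr{P}_{1x}^{\rm D}(i))^{-1}$ is a fair one, but the paper adopts the same ordering convention in Proposition~\ref{CF-ARQthroughput}, so your derivation is consistent with it.
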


The PGF of delay of CF ARQ for $M=2$ coded packets can be computed as $\Phi_D(z)=\pi_I\matr{\Phi}_D(z)\matr{1}/(\pi_I\matr{1})$ using the MGF $\matr{\Phi}_D(z)$ in (\ref{MGFdelayCoding}). Finally, the average delay will be the derivative of $\Phi_D(z)$ at $z=1$, i.e., $\bar{D}=\Phi_D'(1)$.
 
\begin{comment}
\begin{cor}\label{avgdelayCF-ARQ}
Average delay $\bar{D}$ for CF ARQ for the memoryless channel model is given by
\begin{align}
\label{M2CFARQdelay}
\bar{D}&=(8\epsilon + k + 2\epsilon k + 6T\epsilon^3 - 3T\epsilon^4 + 10T\epsilon^5 + 83T\epsilon^6 - 248T\epsilon^7 + 542T\epsilon^8 - 800T\epsilon^9 + 730T\epsilon^{10} - 404T\epsilon^{11}\nonumber\\ 
&+ 124T\epsilon^{12} - 16T \epsilon^{13} - \epsilon^2 k + 26 \epsilon^3 k - 49\epsilon^4 k + 120\epsilon^5 k - 283\epsilon^6k + 528\epsilon^7 k - 950*\epsilon^8 k + 1256\epsilon^9 k  \nonumber\\
&- 1042\epsilon^{10} k+ 516\epsilon^11 k - 140\epsilon^12 k + 16*\epsilon^{13}*k - 9\epsilon^2 + 54\epsilon^3 - 49\epsilon^4 + 142\epsilon^5 - 393\epsilon^6 + 852\epsilon^7 - 1560\epsilon^8  \nonumber\\
&+ 1956\epsilon^9 - 1514\epsilon^{10} + 684\epsilon^{11} - 164\epsilon^{12} + 16\epsilon^{13} + 1)/(- 4\epsilon^6 + 12\epsilon^5 - 12\epsilon^4 + 6\epsilon^3 + \epsilon^2 + 1)^2.
\end{align}
\end{cor}

\begin{proof}
See Appendix \ref{App:AppendixavgdelayCF-ARQ}.
\end{proof}
\end{comment}

In Sect. \ref{sims}, we numerically evaluate the throughput and average delay for the GE channel model detailed in Sect. \ref{model} and using the generating functions derived in Sect. \ref{CFcoding}.

%%%%%%%%%%%%%%%%%%%%%%%%%%%%%%%%%%
\section{Numerical Results}
\label{sims}
We numerically investigate the throughput $\eta$ and average delay $\bar{D}$ of the point-to-point GE channel. Our objective is to understand the impacts of feedback and cumulative feedback (CF) under less reliable and bursty channel conditions.

First assume that the forward and reverse channels have the same erasure rates, i.e., $\epsilon^{(f)}=\epsilon^{(r)}=\epsilon$. In Fig. \ref{SymmetricChannels}, we fix the burst rate of the forward channel, i.e., $r^{(f)}$, and vary the burst rate of the reverse channel, i.e., $r^{(r)}$, and vice versa. As timeout $T$ increases, both the throughput $\eta$ and the average delay $\bar{D}$ increase. As $\epsilon$ increases, it is clear that $\eta$ gets lower and $\bar{D}$ increases both for uncoded ARQ and CF ARQ. Sensitivity of $\eta$ to timeout $T$ also increases under burst errors (low $r$). If the feedback erasures are bursty, $\eta$ decreases with feedback delay. We also observe that $\eta$ of CF ARQ is higher than $\eta$ of uncoded ARQ. The difference becomes significant when the feedback is bursty and $\epsilon$ is high. When $\epsilon$ is small, since the RTTs of CF ARQ and uncoded ARQ are $k+1$ and $k$, respectively, $\bar{D}$ of CF ARQ is higher. However, $\bar{D}$ of CF ARQ is smaller when the feedback is bursty and $\epsilon$ is high. Hence, CF ARQ is more robust to burst errors in the feedback. When there is perfect feedback with $\epsilon^{(r)}=0$, uncoded ARQ has lower $\bar{D}$ and higher $\eta$ than CF ARQ. Throughput and delay performance of both schemes degrade as the feedback loss $\epsilon^{(r)}$ increases. However, CF ARQ outperforms uncoded ARQ under feedback loss.

\begin{figure*}[t!]
\centering
\includegraphics[width=0.4\textwidth]{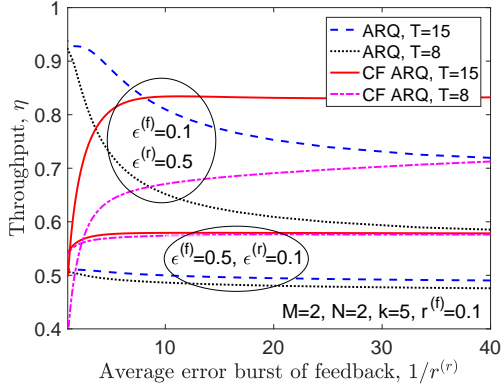}
\hspace{1.5cm}
\includegraphics[width=0.4\textwidth]{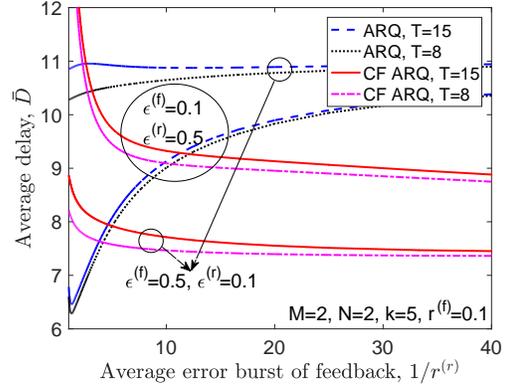}
\caption{\small{Throughput and average delay given $r^{(f)}=0.3$ and different erasure rates: $\epsilon^{(f)}\neq\epsilon^{(r)}$.}\label{AsymmetricChannels}}
\end{figure*}
\begin{figure*}[t!]
\centering
\includegraphics[width=0.4\textwidth]{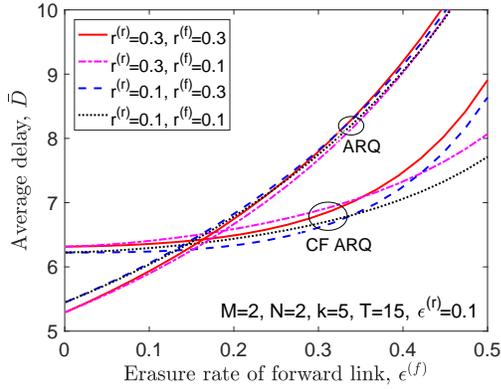}
\hspace{1.5cm}
\includegraphics[width=0.4\textwidth]{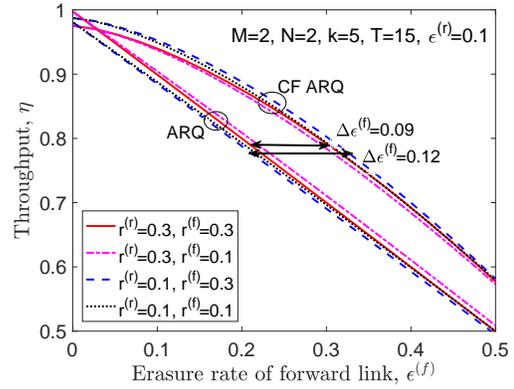}
\caption{\small{Average delay and throughput as function of $\epsilon^{(f)}$, 
under different burst rates.}\label{VaryingErasureRates}}
\end{figure*}

In Fig. \ref{AsymmetricChannels}, we investigate the role of asymmetry such that either the forward channel is more robust to erasures, i.e., $\epsilon^{(f)}=0.1$ and $\epsilon^{(r)}=0.5$, or vice versa. We keep $r^{(f)}$ fixed and increase $r^{(r)}$. When the forward channel is more robust to erasures, $\eta$ is higher both for uncoded ARQ and CF ARQ, and $\bar{D}$ is significantly less compared to the case when the reverse channel is more robust. In this case, CF ARQ provides significantly better throughput than uncoded ARQ. Thus, even if the forward and reverse channels are asymmetric, CF ARQ performs better than uncoded ARQ under bursty feedback.

From Figs. \ref{SymmetricChannels} and \ref{AsymmetricChannels}, we see that erasures of forward channel $\epsilon^{(f)}$ scale the throughput, and feedback erasures $\epsilon^{(r)}$ change the shape of the throughput. Hence, forward erasures $\epsilon^{(f)}$ significantly degrade the throughput of both uncoded ARQ and CF ARQ, and dominate the performance of throughput. Still, CF ARQ throughput gap from uncoded ARQ is higher when $\epsilon^{(r)}$ is high. In terms of delay, CF ARQ is more stable than uncoded ARQ when $\epsilon^{(f)}$ increases, and less stable when $\epsilon^{(r)}$ increases. Still, CF ARQ is more stable when $\epsilon^{(r)}$ is high. 

We next investigate the robustness of CF ARQ to forward erasures. Letting $\epsilon^{(r)}=0.1$, we illustrate $\eta$ and $\bar{D}$ of uncoded ARQ and CF ARQ in Fig. \ref{VaryingErasureRates} for different sets of burst rates. From the plots, we see that CF ARQ provides a higher $\eta$, and even more forward erasures can be compensated (up to $\Delta \epsilon^{(f)}=0.12$) with CF for $M=2$ packets if the feedback channel is more bursty without sacrificing $\bar{D}$. CF ARQ can provide a gain of 18\% in terms of throughput.

Uncoded ARQ is very sensitive to error bursts. The higher the burst rate, the lower its throughput is and the higher its delay is. To compensate for the forward erasures, CF can be used, which can provide significantly less delay, and better throughput. Similarly, when the feedback erasures dominate, performance of CF is much better in terms of throughput, and CF can provide reductions in delay under bursty feedback.

%%%%%%%%%%%%%%%%%%%%%%%%%%%%%%%%%%
\section{Conclusions}
\label{conc}
We proposed a cumulative feedback-based ARQ with a sliding window of size 2, and computed the MGFs of transmission and delay times. Contrasting its performance with uncoded ARQ, we demonstrated its robustness under burst errors. The following insights should enable more robust design for packet erasure channels with imperfect and bursty feedback:
\begin{itemize}
\item At high erasures, CF ARQ provides considerably low average delay and high throughput than uncoded ARQ.
\item CF ARQ has benefits under burst errors or higher erasure rates in the reverse channel. It is more predictable across statistics, hence is more stable. This can help design robust systems when feedback is unreliable.
\end{itemize}
 
Incorporating FEC, the transmission rate can be adaptively adjusted with the cumulative feedback for multiple coded packets. Extensions hence include the study of different coded schemes, and the throughput achievable with coding. While the analysis is prohibitively complex for larger window sizes with excessive number of hidden states, the technique can easily be evaluated for general window sizes using a network simulator. This can help understand the scalings between the window size and system parameters. This will pave the way for protocol design for 5G with desirable throughput-delay tradeoffs.

\bibliographystyle{IEEEtran}
\bibliography{Derya}

% Generated by IEEEtran.bst, version: 1.14 (2015/08/26)
\begin{thebibliography}{10}
\providecommand{\url}[1]{#1}
\csname url@samestyle\endcsname
\providecommand{\newblock}{\relax}
\providecommand{\bibinfo}[2]{#2}
\providecommand{\BIBentrySTDinterwordspacing}{\spaceskip=0pt\relax}
\providecommand{\BIBentryALTinterwordstretchfactor}{4}
\providecommand{\BIBentryALTinterwordspacing}{\spaceskip=\fontdimen2\font plus
\BIBentryALTinterwordstretchfactor\fontdimen3\font minus
  \fontdimen4\font\relax}
\providecommand{\BIBforeignlanguage}[2]{{%
\expandafter\ifx\csname l@#1\endcsname\relax
\typeout{** WARNING: IEEEtran.bst: No hyphenation pattern has been}%
\typeout{** loaded for the language `#1'. Using the pattern for}%
\typeout{** the default language instead.}%
\else
\language=\csname l@#1\endcsname
\fi
#2}}
\providecommand{\BIBdecl}{\relax}
\BIBdecl

\bibitem{Popetal2018}
P.~Popovski \emph{et~al.}, ``Wireless access for ultra-reliable low-latency
  communication: Principles and building blocks,'' \emph{IEEE Network},
  vol.~32, no.~2, pp. 16--23, Mar. 2018.

\bibitem{Fettweis2014}
G.~P. Fettweis, ``The tactile internet: Applications and challenges,''
  \emph{IEEE Veh. Technol. Mag.}, vol.~9, no.~1, p. 64Ð70, Mar. 2014.

\bibitem{3GPP2018Mar}
``{3GPP TS 23.725 Study on enhancement of URLLC supporting in 5GC},'' 3GPP,
  Tech. Rep., Mar. 2018.

\bibitem{Andrews2014}
J.~G. Andrews \emph{et~al.}, ``What will {5G} be?'' \emph{IEEE Journ. on Sel.
  Areas in Comm.}, vol.~32, no.~6, pp. 1065--1082, Jun. 2014.

\bibitem{SunShaMed2008}
J.~K. Sundararajan, D.~Shah, and M.~M\'{e}dard, ``{ARQ} for network coding,''
  in \emph{Proc., IEEE ISIT}, 2008.

\bibitem{KatRahHuKatMedCrow2006}
S.~Katti \emph{et~al.}, ``{XORs} in the air: Practical wireless network
  coding,'' in \emph{Proc., Sigcomm}, 2006.

\bibitem{KelDriFra2008}
L.~Keller, E.~Drinea, and C.~Fragouli, ``Online broadcasting with network
  coding,'' in \emph{in Proc. of {NetCod}}, 2008.

\bibitem{FraLunMedPak2007}
C.~Fragouli, D.~Lun, M.~M\'{e}dard, and P.~Pakzad, ``On feedback for network
  coding,'' in \emph{Proc., IEEE Annual Conference on Information Sciences and
  Systems}, Mar. 2007, pp. 248--252.

\bibitem{PakFraSho2005}
P.~Pakzad, C.~Fragouli, and A.~Shokrollahi, ``Coding schemes for line
  networks,'' in \emph{Proc., IEEE ISIT}, 2005.

\bibitem{AusNos2007}
K.~Ausavapattanakun and A.~Nosratinia, ``Analysis of selective-repeat {ARQ} via
  matrix signal-flow graphs,'' \emph{IEEE Trans. Commun.}, vol.~55, no.~1, pp.
  198--204, Jan. 2007.

\bibitem{ZhaKas1999}
Q.~Zhang and S.~A. Kassam, ``{Finite-state Markov model for Rayleigh fading
  channels},'' \emph{IEEE Trans. Commun.}, vol.~47, no.~11, Nov. 1999.

\bibitem{Elliott1963}
E.~O. Elliott, ``Estimates of error rates for codes on burst-noise channels,''
  \emph{Bell System Technical Journal}, vol.~42, pp. 1977--1997, 1963.

\bibitem{LuChang1993}
D.~L. Lu and J.~F. Chang, ``Performance of {ARQ} protocols in nonindependent
  channel errors,'' \emph{IEEE Trans. Commun.}, vol.~41, no.~5, pp. 721--730,
  May 1993.

\bibitem{ChoUn1994}
Y.~J. Cho and C.~K. Un, ``Performance analysis of {ARQ} error controls under
  {Markovian} block error pattern,'' \emph{IEEE Trans. Commun.}, vol.~42, no.
  2-4, pp. 2051--2061, Feb. - Apr. 1994.

\bibitem{LuChang1989}
D.-L. Lu and J.-F. Chang, ``Analysis of {ARQ} protocols via signal flow
  graphs,'' \emph{IEEE Trans. Commun.}, vol.~37, no.~3, pp. 245--251, Mar.
  1989.

\end{thebibliography}

\end{document}